\documentclass[12pt,envcountsame,envcountsect]{llncs}
\usepackage{ifpdf}
\usepackage{a4wide,amssymb,amsmath,enumerate,hhline}
\title{An optimal construction of Hanf sentences}
\author{Benedikt Bollig\inst{1} \and Dietrich Kuske\inst{2}}

\institute{Laboratoire Sp\'ecification et V\'erification, \'Ecole
  Normale Sup\'erieure de Cachan \& Centre National de la Recherche
  Scientifique \and Institut f\"ur Theoretische Informatik, TU Ilmenau}

\pagestyle{plain}
\def\bN{\mathbb N}
\def\cA{\mathcal A}
\def\cB{\mathcal B}
\def\dist{\mathrm{dist}}
\def\sphere{\mathrm{sph}}
\begin{document}
\maketitle

\begin{abstract}
  We give a new construction of formulas in Hanf normal form that are
  equivalent to first-order formulas over structures of bounded
  degree. This is the first algorithm whose running time is shown to
  be elementary. The triply exponential upper bound is complemented by
  a matching lower bound.
\end{abstract}

\section{Introduction}

Various syntactical normal forms for semantical properties of
structures are known. For example, every first-order definable
property that is preserved under extensions of structures is definable
by an existential first-order sentence (\L{}o\'s-Tarski
\cite{Tar54,Los55}). Gaifman's normal form is another example that
formalizes the observation that first-order logic can only express
local properties \cite{Gai82}. A third example in this line is Hanf's
theorem, giving another formalization of locality of first-order logic
(at least for structures of bounded degree) \cite{Han65,EbbF95}.

Gaifman's and Hanf's theorems have found applications in finite model
theory and in particular in parametrized complexity. Namely, they lead
to efficient parametrized algorithms deciding whether a formula holds
in a (finite) structure
\cite{See96,Lib00,FriG01,FriG04,KusL05a,DawGKS06,Lin08,KusL08,KusL11}
and even for more general algorithms that list all the satisfying
assignments \cite{DurG07,KazS11}. Hanf's theorem was also used in the
transformation of logical formulas into different automata models
\cite{Tho97b,GiaRST96,BolL04,BolK08}.

In \cite{DawGKS07}, it was shown that passing from arbitrary formulas
to those in \L{}o\'s-Tarski or Gaifman normal form leads to a
non-elementary blowup. The same paper also proves that for structures
of bounded degree, the blowup for Gaifman's normal form is between 2-
and 4-fold exponential, and that for \L{}o\'s-Tarski normal forms (for
a restricted class of structures) is between 2- and 5-fold
exponential.

This paper shows that Hanf's normal form can be computed in three-fold
exponential time and that this is optimal since there is a necessary
blowup of three exponentials when passing from general first-order
formulas to their Hanf normal form. We remark (as already observed by
Seese~\cite{See96}) that the first construction of Hanf normal
forms~\cite{EbbF95} is not effective since satisfiability of
first-order formulas in graphs of bounded degree is
undecidable, also when we restrict to finite structures~\cite{Wil94}. Only Seese \cite{See96} gave a small
additional argument showing that Hanf normal forms can indeed be
computed. But his algorithm is not primitive recursive. This was
improved later to a primitive-recursive algorithm by Durand and
Grandjean \cite{DurG07} and (independently) by
Lindell~\cite{Lin08}. Their papers do not give an upper bound for the
construction of Hanf normal forms, but on the face of it, the
algorithm seems not to be elementary.\footnote{In the meantime, A.\
  Durand has informed us of ongoing work aiming at an elementary upper
  bound for their algorithm.} Their algorithm is a
quantifier-alternation procedure that only works if the signature
consists of finitely many injective functions (following Seese, one
can bi-interprete every structure of bounded degree in such a
structure, so this is no real restriction of the
algorithm). Differently, our algorithm follows the original proof of
Hanf's theorem very closely by examining spheres of bounded diameter,
but avoiding the detour via Ehrenfeucht-Fra\"\i ss\'e-games.

\paragraph{Acknowledgement} We would like to thank Luc Segoufin and
Arnaud Durand for comments on an earlier version and for hints to the
literature that improved this paper.

\section{Definitions and background}

Throughout this paper, let $L$ be a finite relational signature and
let $L_m$ denote the extension of $L$ by the constants $c_1,
c_2,\dots, c_m$. Let $\cA$ be an $L_m$-structure. We write $a\in\cA$
when we mean that $a$ is an element of the universe of
$\cA$. Furthermore, $\bar a$ denotes a tuple $(a_1,\dots,a_n)$ of
length $n$ of elements of some structure $\cA$ and $\bar x$ is the
list of variables $(x_1,\dots,x_n)$. In both cases, $n$ will be
determined by the context. Finally, we define a distance (from
$\bN\cup\{\infty\}$) on the universe of $\cA$ setting
$\dist^\cA(a,b)=0$ iff $a=b$ and $\dist^\cA(a,c)=d+1$ if there exists
$b\in\cA$ with $\dist^\cA(a,b)\le d$, there is some tuple in some of
the relations of $\cA$ that contains both, $b$ and $c$, and there is
no such $b\in\cA$ with $\dist^\cA(a,b)<d$, and $\dist^\cA(a,b)=\infty$
if $\dist^\cA(a,b)\neq d$ for all $d\in\bN$. Next, the \emph{degree}
of $a\in\cA$ is the number of elements $b\in\cA$ with
$\dist^\cA(a,b)=1$, the degree of $\cA$ is the supremum of the degrees
of $a\in\cA$.

Let $\cA$ be an $L$-structure, $\bar a=(a_1,\dots,a_n)\in\cA$,
and $d>0$. Then $B_d^\cA(\bar a)$ is the set of elements $b\in\cA$
with $\dist^{\cA}(a_i,b)<d$ for some $1\le i\le n$. The
\emph{$d$-sphere around $\bar a$} is the $L_n$-structure
\[
    S_d^\cA(\bar a)=(\cA\restriction B_d^\cA(\bar a),\bar a)\,.
\]
A \emph{$d$-sphere (with $n$ centers)} is an $L_n$-structure
$(\cA,\bar a)$ with $B_d^{\cA}(\bar a)=\cA$. The $L_n$-structure
$(\cA,\bar a)$ is a \emph{sphere} if there exists $d>0$ such that
$(\cA,\bar a)$ is a $d$-sphere; the least such $d$ is denoted
$d(\tau)$ and is the \emph{radius of $(\cA,\bar a)$}. The $d$-sphere
$\tau$ is \emph{realised by $\bar a$ in $\cA$} if
\[
  \tau\cong S_d^\cA(\bar a)\,.
\]

If two $L$-structures $\cA$ and $\cB$ satisfy exactly the same
first-order sentences, then we write $\cA\equiv\cB$. If they only
satisfy the same sentences of quantifier rank $\le r$, then
$\cA\equiv_r\cB$. Provided the degrees of $\cA$ and $\cB$ are finite,
both these concepts can be characterized using the number of
realisations of spheres. 

\begin{theorem}[Hanf \cite{Han65}]
  \label{T Hanf}
  For any $L$-structures $\cA$ and $\cB$, we have $\cA\equiv\cB$
  whenever any sphere in $\cA$ or $\cB$ is finite and any sphere is
  realised in $\cA$ and $\cB$ the same number of times or
  $\ge\aleph_0$ times.
\end{theorem}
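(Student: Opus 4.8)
The plan is to prove $\cA\equiv_r\cB$ for every $r\in\bN$; since every first-order sentence has finite quantifier rank, this gives $\cA\equiv\cB$. By the Ehrenfeucht-Fra\"\i ss\'e theorem it suffices to describe a winning strategy for Duplicator in the $r$-round game on $\cA$ and $\cB$. Fix the radii $d_k:=3^{\,r-k}$ for $0\le k\le r$, so that $d_r=1$ and $d_{k-1}=3d_k$. Duplicator will maintain the invariant that, after $k$ rounds with pebbled tuples $\bar a=(a_1,\dots,a_k)\in\cA$ and $\bar b=(b_1,\dots,b_k)\in\cB$,
\[
  S_{d_k}^\cA(\bar a)\;\cong\;S_{d_k}^\cB(\bar b)\,.
\]
Before the first round this reads $S_{d_0}^\cA()\cong S_{d_0}^\cB()$, which holds since both spheres are the empty $L_0$-structure; and since $d_k\ge1$ throughout, an isomorphism of $d_k$-spheres restricts to one of $1$-spheres, i.e.\ $a_i\mapsto b_i$ is a partial isomorphism of $\cA$ and $\cB$ --- so maintaining the invariant is all Duplicator needs.

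I will repeatedly use the following locality fact: if $g\colon S_e^\cA(\bar c)\to S_e^\cB(\bar c')$ is an isomorphism and $c\in\cA$ satisfies $\dist^\cA(c,c_i)<e$ for some $i$, then $g$ preserves the function $\dist^\cA(c,\cdot)$ below $e-\dist^\cA(c,c_i)$; consequently, whenever the $t$-ball around $(\bar c,c)$ lies inside $B_e^\cA(\bar c)$, the map $g$ restricts to an isomorphism $S_t^\cA(\bar c,c)\cong S_t^\cB(\bar c',g(c))$. One proves this by transporting length-$<t$ geodesics --- which stay inside $B_e^\cA(\bar c)$ --- along $g$ and $g^{-1}$; in particular $S_e^\cA(\bar c)\cong S_e^\cB(\bar c')$ implies $S_{e'}^\cA(\bar c)\cong S_{e'}^\cB(\bar c')$ for all $1\le e'\le e$. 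Now for the inductive step, let $f\colon S_{d_k}^\cA(\bar a)\xrightarrow{\ \sim\ }S_{d_k}^\cB(\bar b)$ witness the invariant and suppose Spoiler chooses $a_{k+1}\in\cA$ (the other case is symmetric via $f^{-1}$); write $d:=d_{k+1}$, so $d_k=3d$. If $\dist^\cA(a_{k+1},a_i)<2d$ for some $i$, then $a_{k+1}$ lies in the domain of $f$ (as $2d<d_k$); put $b_{k+1}:=f(a_{k+1})$. The triangle inequality places the $d$-ball around $(\bar a,a_{k+1})$ inside $B_{d_k}^\cA(\bar a)$, so by the locality fact $f$ restricts to $S_d^\cA(\bar a,a_{k+1})\cong S_d^\cB(\bar b,b_{k+1})$ and the invariant is restored.

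Suppose instead $\dist^\cA(a_{k+1},a_i)\ge 2d$ for all $i$. Then $B_d^\cA(\bar a)$ and $B_d^\cA(a_{k+1})$ are disjoint and no relation tuple of $\cA$ meets both, so $S_d^\cA(\bar a,a_{k+1})$ is the disjoint union of $S_d^\cA(\bar a)$ and the single-centre sphere $\sigma:=S_d^\cA(a_{k+1})$ (the summands carrying $c_1,\dots,c_k$ and $c_{k+1}$). What Duplicator needs is $b_{k+1}\in\cB$ realising $\sigma$ with $\dist^\cB(b_{k+1},b_i)\ge 2d$ for all $i$: then $S_d^\cB(\bar b,b_{k+1})\cong S_d^\cB(\bar b)\uplus\sigma\cong S_d^\cA(\bar a)\uplus\sigma$ (using $S_d^\cA(\bar a)\cong S_d^\cB(\bar b)$ from the locality fact), restoring the invariant. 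Finding such a $b_{k+1}$ is the only point at which the hypothesis of the theorem is used. If $\sigma$ is realised $\ge\aleph_0$ times in $\cA$, then also in $\cB$; as those realisations within distance $<2d$ of $\bar b$ all lie in the sphere $S_{2d}^\cB(\bar b)$, which is finite by hypothesis, infinitely many realisations are ``far'' from $\bar b$ and any of them serves as $b_{k+1}$. Otherwise $\sigma$ is realised exactly $N<\aleph_0$ times in $\cA$, hence exactly $N$ times in $\cB$; by the locality fact $f$ maps the set of realisations of $\sigma$ in $B_{2d}^\cA(\bar a)$ bijectively onto the set of realisations of $\sigma$ in $B_{2d}^\cB(\bar b)$ --- two finite sets, being inside finite spheres --- say of common size $M$. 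Then $\sigma$ is realised $N-M$ times far from $\bar a$ and $N-M$ times far from $\bar b$, and since $a_{k+1}$ witnesses $N-M\ge 1$ the required $b_{k+1}$ exists. This closes the induction; Duplicator wins, so $\cA\equiv_r\cB$, and as $r$ was arbitrary, $\cA\equiv\cB$.

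I expect the main obstacle to be the careful bookkeeping behind the locality fact and its uses --- verifying by triangle inequalities that an isomorphism of $d_k$-spheres restricts to the correct isomorphisms of $d_{k+1}$-spheres and of single-centre $d_{k+1}$-spheres of nearby points, and that the ``far'' case genuinely yields a disjoint-union decomposition in both structures. The counting step is short but is the conceptual heart: it is the only place where the realisation-number hypothesis enters, and the finiteness-of-spheres hypothesis is precisely what makes the ``near'' sets finite, so that the subtraction $N-M$ makes sense.
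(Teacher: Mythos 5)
Your proof is correct, and it is the same approach the paper takes: the paper's proof of Theorem~\ref{T Hanf} is a one-line reference to the standard Ehrenfeucht--Fra\"\i ss\'e-game argument from \cite{EbbF95,Lib04}, which is exactly the strategy (shrinking radii $3^{r-k}$, the near/far case split, and the counting of realisations inside the finite $2d$-ball) that you spell out in detail. The bookkeeping you flag as the main obstacle --- the triangle-inequality checks behind the locality fact and the disjoint-union decomposition in the far case --- goes through as you describe.
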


This result was sharpened by Fagin, Stockmeyer \& Vardi (see also
Ebbinghaus \& Flum \cite{EbbF95}) to characterize the relation~$\equiv_r$:

\begin{theorem}[Fagin et al.~\cite{FagSV93}]
  \label{T Hanf 2}
  For all $r,f\in\mathbb N$ there exist $d,m\in\mathbb N$ (where $d$
  depends on $r$, only) such that for any $L$-structures $\cA$ and
  $\cB$ of degree $\le f$, we have $\cA\equiv_r\cB$ whenever any
  $d$-sphere with one center is realised in $\cA$ and $\cB$ the same
  number of times or $\ge m$ times.
\end{theorem}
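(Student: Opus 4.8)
\medskip\noindent The plan is to prove this classical statement by an Ehrenfeucht--Fra\"\i ss\'e game argument. Recall that $\cA\equiv_r\cB$ holds precisely when Duplicator wins the $r$-round EF game on $\cA$ and $\cB$, so it suffices to fix the parameters and to describe a winning strategy for Duplicator. First I would choose the radius by a downward recursion of the form $d_r:=1$, $d_{i-1}:=3d_i+1$, so that $d:=d_0$ is of order $3^r$ and depends on $r$ only, as the theorem requires; alongside $d_i$ I would use one or two auxiliary radii of the same order of magnitude (smaller than $d_i$, larger than $d_{i+1}$, each by a constant factor) for the ``global'' part of the invariant below. Since in a structure of degree $\le f$ a $d$-sphere with one center has at most $N:=1+f+\dots+f^{\,d-1}$ elements, there are only finitely many isomorphism types of one-center $d$-spheres of degree $\le f$; I would then take $m$ to be a sufficiently large number (it may depend on $f$ as well and comes out doubly exponential in $r$) together with a strictly decreasing sequence $m=m_0>m_1>\dots>m_r$ whose successive gaps exceed $N$ plus the number of relevant sphere types.

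Duplicator maintains a two-part invariant. After round $i$, with pebbles $\bar a=(a_1,\dots,a_i)$ in $\cA$ and $\bar b=(b_1,\dots,b_i)$ in $\cB$: \emph{(I1)} the map $a_j\mapsto b_j$ extends to an isomorphism $h_i\colon S^{\cA}_{d_i}(\bar a)\cong S^{\cB}_{d_i}(\bar b)$; and \emph{(I2)} after deleting a ball of radius $e_i$ (one of the auxiliary radii) around $\bar a$ from $\cA$ and around $\bar b$ from $\cB$, the two resulting structures realise every one-center $e_i$-sphere type the same number of times, or both at least $m_i$ times. For $i=0$ the deleted balls are empty and (I1)--(I2) are just the hypothesis of the theorem (restricting from $d$-spheres to the coarser $e_0$-spheres only relaxes the counting condition). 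Now suppose the invariant holds after round $i$ and Spoiler plays, say, $a_{i+1}\in\cA$ (the mirror case is symmetric). If $a_{i+1}$ is ``close'', i.e.\ within a suitable distance of some pebble so that it lies in the domain $B^{\cA}_{d_i}(\bar a)$ of $h_i$ together with all of its $d_{i+1}$-neighbourhood, then Duplicator answers $b_{i+1}:=h_i(a_{i+1})$ and restricts $h_i$ to re-establish (I1); (I2) survives because the symmetric difference between the old and new deleted regions still sits inside $B_{d_i}(\bar a)$ and is therefore matched on both sides by $h_i$. If $a_{i+1}$ is ``far'', i.e.\ beyond that distance from every pebble, then $a_{i+1}$ lies outside the deleted ball of $\cA$, so by (I2) (and since $m_i$ leaves enough slack for the at most $r$ pebbles of $\cB$) Duplicator can pick $b_{i+1}$ outside the deleted ball of $\cB$, far from all of $\bar b$, with $S^{\cB}_{d_{i+1}}(b_{i+1})\cong S^{\cA}_{d_{i+1}}(a_{i+1})$; then $S^{\cA}_{d_{i+1}}(\bar a\,a_{i+1})$ and $S^{\cB}_{d_{i+1}}(\bar b\,b_{i+1})$ are disjoint unions of pairwise isomorphic spheres, giving (I1), and (I2) is restored after shrinking the threshold by at most $N$. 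After $r$ rounds, (I1) with $d_r=1$ says exactly that $a_j\mapsto b_j$ is a partial isomorphism, so Duplicator has won.

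I expect the bulk of the work to lie in the constant-chasing of this back-and-forth step rather than in any single idea. The radii have to shrink at mutually compatible geometric rates so that the $d_{i+1}$-neighbourhood of a freshly pebbled element falls either entirely inside the already-matched ball $B_{d_i}(\bar a)$ (the close case) or entirely outside the deleted ball, far enough from the old pebbles to form a separate connected piece (the far case), with no element stranded in an intermediate zone; arranging this is exactly what forces $d$ to grow like $3^r$. Dually one has to budget the threshold: playing a point removes a ball of size $\le N$ from the counted complement and can use up at most one realisation of each sphere type, so $m$ must exceed $r$ times a quantity exponential in $d$, whence $m$ ends up doubly exponential in $r$ -- harmless here, since the theorem constrains only $d$ to depend on $r$ alone. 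A small but genuine point is that a structure of bounded degree may be infinite, so the ``$\ge m_i$'' clause in (I2) is really needed and cannot be replaced by a degree-preserving bijection between $\cA$ and $\cB$.
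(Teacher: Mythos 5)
Your proposal is correct and takes essentially the same route as the paper, whose proof of this theorem consists precisely of the remark that the counting condition gives Duplicator a winning strategy in the $r$-round Ehrenfeucht--Fra\"\i ss\'e game (with the details, which you reconstruct faithfully --- radii shrinking like $3^r$, the close/far case split, and a threshold large enough to absorb the pebbled balls --- delegated to \cite{EbbF95,Lib04}). The only cosmetic difference is that your auxiliary invariant (I2) with decreasing thresholds is somewhat heavier bookkeeping than the standard presentation, where the far-case counts are derived directly from the hypothesis together with the neighbourhood isomorphism $h_i$; this does not affect correctness.
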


\begin{proof}[of both theorems]
  The proof proceeds by showing that the respective counting property
  implies that duplicator has a winning strategy in the
  Ehrenfeucht-Fra\"\i ss\'e-game \cite{EbbF95,Lib04}. This then
  implies the respective equivalence of $\cA$ and $\cB$.\qed
\end{proof}

This theorem has (at least) three different applications: The first
application (and its original motivation in~\cite{FagSV93}) is a
technique to prove that certain properties $\mathfrak P$ are not
expressible in first-order logic: One provides two lists of structures
$\cA_r$ and $\cB_r$ where $\cA_r$ has the desired property and $\cB_r$
does not. Furthermore, for any $r$, $\cA_r$ and $\cB_r$ satisfy the
counting condition from Theorem~\ref{T Hanf 2} with $d$ and $m$
determined by $r$ and the degree $f$ of $\cA_r$ and $\cB_r$. This
implies $\cA_r\equiv_r\cB_r$ and therefore the property $\mathfrak P$
cannot be expressed by a first-order sentence of quantifier depth
$r$. Since this holds for all $r$, the property is not first-order
expressible. The simplest such property is connectivity of a graph
where $\cA_r$ can be chosen a circle of size $\max(m,2d)$ and $\cB_r$
a disjoint union of two copies of $\cA_r$ ($m$ and $d$ are the
constants from Theorem~\ref{T Hanf 2} for $f=2$).

The second application is an efficient evaluation of first-order
properties on finite structures of bounded degree
\cite{See96,FriG04,DurG07}: The idea is to count the number of
realisations of spheres up to the threshold $m$ and, depending on the
vector obtained that way, decide whether the formula holds or not (we
will come back to this aspect later in this section).

The third application is a normal form for first-order
sentences~\cite{EbbF95}. For a finite $d$-sphere~$\tau$ with $n$
centers, let $\sphere_\tau(\bar x)$ denote a formula such that
$(\cA,\bar a)\models\sphere_\tau$ iff $S_d^\cA(\bar a)\cong\tau$. A
\emph{Hanf sentence} asserts that there are at least $m$ realisations
of the finite sphere $\tau$ with one center. Formally, it has the form
\begin{equation}
   \exists x_1,x_2,\dots, x_{m}:
    \bigwedge_{1\le i<j\le m}x_i\neq x_j\land
    \forall x:\left(\left(\bigvee_{1\le i\le m} x=x_i\right)
                            \to \sphere_\tau(x)\right)
    \label{eq:shorthand}
\end{equation}
which we abbreviate as
\[
   \exists^{\ge m}x:\sphere_\tau(x)\,.
\]
A sentence is in
\emph{Hanf normal form} if it is a Boolean combination of Hanf
sentences.

Let $\varphi$ and $\psi$ be two formulas with free variables in
$x_1,\dots,x_n$. To simplify notation, we will say that $\varphi$ and
$\psi$ are \emph{$f$-equivalent} if, for all structures $\cA$ of
degree~$\le f$, we have
\[
 \cA\models\forall x_1\forall x_2\dots\forall x_n:
   (\varphi\leftrightarrow\psi)\,.
\]

\begin{corollary}[Ebbinghaus \& Flum \cite{EbbF95}]
  \label{C Hanf}
  For every sentence $\varphi$ and all $f\in\mathbb N$, there exists
  an $f$-equivalent sentence $\psi$ in Hanf normal form.
\end{corollary}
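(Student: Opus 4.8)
The plan is to derive Corollary~\ref{C Hanf} from Theorem~\ref{T Hanf 2} by observing that, on structures of degree~$\le f$, the relation $\equiv_r$ has only finitely many classes, each of which is defined by a Boolean combination of Hanf sentences. So fix a sentence $\varphi$ and a bound $f$, let $r$ be the quantifier rank of $\varphi$, and apply Theorem~\ref{T Hanf 2} to $r$ and $f$ to obtain $d,m\in\bN$. The finiteness input I would use is that, up to isomorphism, there are only finitely many $d$-spheres with one center and degree~$\le f$: in such a sphere every element lies within distance~$<d$ of the single center, so its universe has at most $1+f+f^2+\dots+f^{d-1}$ elements, and over the finite signature $L_1$ only finitely many structures of bounded size exist up to isomorphism. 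I would fix an enumeration $\tau_1,\dots,\tau_k$ of these types.

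Next I would assign to each $L$-structure $\cA$ of degree~$\le f$ its \emph{profile} $\mathrm{pr}(\cA)=(p_1,\dots,p_k)$ with $p_i=\min\bigl(m,\,\#\{a\in\cA : S_d^\cA(a)\cong\tau_i\}\bigr)$. Since each $S_d^\cA(a)$ is again a $d$-sphere with one center and degree~$\le f$, it is isomorphic to exactly one $\tau_i$; hence the profile is well defined, and the sets of degree-$\le f$ structures with a given profile partition that class. Two structures of equal profile realise every $d$-sphere with one center either equally often or both at least $m$ times, so Theorem~\ref{T Hanf 2} yields $\mathrm{pr}(\cA)=\mathrm{pr}(\cB)\Rightarrow\cA\equiv_r\cB$, and in particular such $\cA$ and $\cB$ agree on $\varphi$. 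For every profile $p=(p_1,\dots,p_k)$ I would then write the Boolean combination of Hanf sentences
\[
  \psi_p\;:=\;\bigwedge_{i\,:\,p_i<m}\bigl(\exists^{\ge p_i}x:\sphere_{\tau_i}(x)\;\land\;\neg\,\exists^{\ge p_i+1}x:\sphere_{\tau_i}(x)\bigr)\;\land\;\bigwedge_{i\,:\,p_i=m}\exists^{\ge m}x:\sphere_{\tau_i}(x)\,,
\]
using that $(\cA,a)\models\sphere_{\tau_i}$ iff $S_d^\cA(a)\cong\tau_i$; then $\cA\models\psi_p$ iff $\mathrm{pr}(\cA)=p$ for every $\cA$ of degree~$\le f$. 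Finally set $\psi:=\bigvee\{\psi_q : q=\mathrm{pr}(\cB)\text{ for some }\cB\text{ of degree }\le f\text{ with }\cB\models\varphi\}$, a finite (possibly empty) disjunction, hence in Hanf normal form. If $\cA$ has degree~$\le f$ and $\cA\models\varphi$, then $\mathrm{pr}(\cA)$ indexes a disjunct, so $\cA\models\psi$; conversely, if $\cA$ has degree~$\le f$ and $\cA\models\psi$, then $\cA\models\psi_q$ for some $q$ in the index set, whence $\mathrm{pr}(\cA)=q=\mathrm{pr}(\cB)$ for a witnessing $\cB$ with $\cB\models\varphi$, and $\cA\equiv_r\cB$ gives $\cA\models\varphi$. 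Thus $\psi$ is $f$-equivalent to $\varphi$.

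For the bare existence statement the only point needing care is the finiteness of the set of relevant sphere types (hence of the index set of the outer disjunction), together with the fact that profiles partition the bounded-degree structures; everything model-theoretic is already packed into Theorem~\ref{T Hanf 2}, which we are allowed to assume. The real difficulty, which the rest of the paper addresses, is to make this construction effective in elementary time: this demands explicit and carefully optimised bounds on $d$, on $m$, on the number $k$ of sphere types, and on the sizes of the formulas $\sphere_{\tau_i}$, so that the resulting sentence $\psi$ stays within a triply exponential size bound.
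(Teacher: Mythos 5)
Your proposal is correct and follows essentially the same route as the paper's own proof: both apply Theorem~\ref{T Hanf 2} to the quantifier rank of $\varphi$, partition the degree-$\le f$ structures by the thresholded count vector of realised $d$-sphere types, and take the disjunction over those vectors witnessed by a model of $\varphi$. The only cosmetic difference is that you expand $\exists^{=t_i}$ explicitly as $\exists^{\ge t_i}\land\neg\exists^{\ge t_i+1}$, which the paper leaves as shorthand.
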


\begin{proof}
  Let $r$ be the quantifier rank of $\varphi$ and let $d$ and $m$
  denote the numbers from Theorem~\ref{T Hanf 2}. Then there are only
  finitely many $d$-spheres of degree $\le f$ with one center; let
  $(\tau_1,\dots,\tau_n)$ be the list of these spheres. Now we
  associate with every structure $\cA$ of degree~$\le f$ a tuple
  $t^\cA\in\{0,1,\dots,m\}^n$ as follows: For $1\le i\le n$, let
  $t^\cA_i$ denote the minimum of $m$ and the number of $a\in\cA$ with
  $S^{\cA}_d(a)\cong\tau_i$. Note that there are only finitely many
  tuples $t^\cA$. Now $\psi$ is a disjunction. It has one disjunct for
  every $t\in\{0,1,\dots,m\}^n$ for which there exists a structure
  $\cA$ of degree~$\le f$ with $\cA\models\varphi$ and $t=t^\cA$. This
  disjunct is the conjunction of the following formulas for $1\le i\le
  n$:
  \[
  \begin{cases}
    \exists^{=t_i}x:\sphere_\tau(x) & \text{ if }t_i< m\\
    \exists^{\ge m}x:\sphere_\tau(x) & \text{ if }t_i= m\,.
  \end{cases}
  \]
  \qed
\end{proof}

Note that $\varphi$ is satisfiable if and only if the disjunction
$\psi$ is not empty.\label{Page-non-effective} Hence an effective
construction of $\psi$ would allow us to decide satisfiability of
first-order formulas in structures of degree $\le f$ which is not
possible \cite{Wil94}.

We now turn to finite structures. Clearly, the disjunction $\psi$ as
in the above corollary is also equivalent to $\varphi$ for all finite
structures of degree $\le f$. But in this context, we can also define
another disjunction $\psi_{\mathrm{fin}}$ by taking only those
$t\in\{0,1,\dots,m\}^n$ for which there exists a \emph{finite}
structure $\cA$ of degree $\le f$ with $\cA\models\varphi$ and
$t=t^\cA$ (cf., e.g., \cite[page 101]{Lib04}). As above, an effective
construction of $\psi_{\mathrm{fin}}$ would allow us to decide
satisfiability of first-order formulas in \emph{finite} structures of
degree $\le f$ which, again, is not possible \cite{Wil94}.

Despite the fact that the proof of Cor.~\ref{C Hanf} is not
constructive, Seese showed that some sentence $\psi$ as required in
Cor.~\ref{C Hanf} can be computed.

\begin{theorem}[Seese~\protect{\cite[page 523]{See96}}]
  \label{T Seese}
  From a sentence $\varphi$ and $f\in\mathbb N$, one can compute an
  $f$-equivalent sentence in Hanf normal form.
\end{theorem}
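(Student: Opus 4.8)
The plan is to make the non-constructive Corollary~\ref{C Hanf} effective by a blind search. Corollary~\ref{C Hanf} guarantees that \emph{some} sentence $\psi$ in Hanf normal form is $f$-equivalent to $\varphi$; so it suffices to enumerate all candidates and, for each, \emph{semi-decide} whether it does the job, outputting the first one that passes. The sentences in Hanf normal form are effectively enumerable: a finite $d$-sphere with one center is just a finite $L_1$-structure that coincides with the $d$-ball around its single distinguished element for some $d$ (a decidable property), the formula $\sphere_\tau(x)$ is computable from such a $\tau$, and Hanf sentences together with their Boolean combinations are then listable. (One could even restrict the list to the shape produced in the proof of Corollary~\ref{C Hanf}, using the $d$ and $m$ from Theorem~\ref{T Hanf 2}, but this is not needed.) So the whole burden is to show that ``$\psi$ is $f$-equivalent to $\varphi$'' is semi-decidable.

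This rests on the observation that ``degree $\le f$'' is first-order expressible. As $L$ is a fixed finite relational signature, ``$x$ and $y$ occur together in some tuple of some relation'' is defined by a fixed first-order formula $\mathrm{adj}(x,y)$, and then $\dist^\cA(a,b)=1$ holds exactly when $a\neq b$ and $\mathrm{adj}(a,b)$; hence
\[
  \delta_f \;:=\; \forall x\,\neg\exists y_1\cdots\exists y_{f+1}\Bigl(\bigwedge_{1\le i<j\le f+1}y_i\neq y_j \;\wedge\; \bigwedge_{1\le i\le f+1}\bigl(y_i\neq x\wedge \mathrm{adj}(x,y_i)\bigr)\Bigr)
\]
is an $L$-sentence, computable from $f$, that holds in precisely the $L$-structures of degree $\le f$. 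Since $\varphi$ and $\psi$ are sentences, $f$-equivalence of $\varphi$ and $\psi$ therefore amounts to \emph{validity} (truth in all $L$-structures) of $\delta_f\rightarrow(\varphi\leftrightarrow\psi)$. By G\"odel's completeness theorem the valid first-order sentences form a recursively enumerable set, so ``$\varphi$ is $f$-equivalent to $\psi$'' is semi-decidable, as needed.

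Putting this together: dovetail over all pairs $(\psi,\Pi)$ with $\psi$ a sentence in Hanf normal form and $\Pi$ a formal proof, halt once $\Pi$ derives $\delta_f\rightarrow(\varphi\leftrightarrow\psi)$, and return that $\psi$. Termination is ensured by Corollary~\ref{C Hanf} and completeness, and correctness is immediate from the definition of $\delta_f$. I do not expect any genuine obstacle in this argument: the only non-routine point is the reduction of $f$-equivalence to plain first-order validity via $\delta_f$, everything else being bookkeeping and appeals to results already in hand. What the argument conspicuously fails to deliver is any bound on the running time --- the search stops only when a proof happens to surface, so it is not even primitive recursive --- and this cannot be patched naively, since (as noted just after Corollary~\ref{C Hanf}) any computable bound on the size of a degree-$\le f$ model of $\varphi$ would decide satisfiability over such structures, which is impossible \cite{Wil94}; the search gets away with it only because it never has to certify \emph{non}-validity. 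Removing the non-elementary overhead is exactly what the remainder of the paper addresses.
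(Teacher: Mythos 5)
Your proposal is correct and is essentially the paper's own proof: the paper likewise takes a sentence $\beta$ expressing ``degree $\le f$'' and searches the recursively enumerable set of tautologies for one of the form $\beta\to(\varphi\leftrightarrow\psi)$ with $\psi$ in Hanf normal form, with termination guaranteed by the existence result. You merely spell out the details (the explicit construction of $\delta_f$ via an adjacency formula, the enumerability of Hanf normal forms, and the dovetailing) that the paper leaves implicit.
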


\begin{proof}
  Let $\beta$ express that a structure has degree $\le f$. Then search
  for a tautology of the form $\beta\to(\varphi\leftrightarrow\psi)$
  where $\psi$ is a sentence in Hanf normal form. Since the set of
  tautologies is recursively enumerable, we can do this search
  effectively. And since we know from Theorem~\ref{T Hanf 2} that an
  $f$-equivalent sentence in Hanf normal form exists, this search will
  eventually terminate successfully.\qed
\end{proof}

Note that Seese's procedure to compute $\psi$ is not primitive
recursive. A primitive recursive construction of a Hanf normal form
was described by Durand and Grandjean \cite{DurG07} and independently
by Lindell~\cite{Lin08}. They present a quantifier elimination scheme
and do not rest their reasoning on Ehrenfeucht-Fra\"iss\'e-games. But
so far, no elementary upper bound for the running time of their
algorithm is known. The main result of this paper is an elementary
procedure for the computation of a Hanf normal form. This is achived
by a new (direct) proof of Corollary~\ref{C Hanf} that does not use
games.

The effective constructions of Hanf normal forms led
Seese~\cite{See96}, Durand and Grandjean~\cite{DurG07} and
Lindell~\cite{Lin08} to efficient algorithms for the evaluation of
first-order queries on structures of bounded degree. Seese showed that
sentences in Hanf normal form can be evaluated in time linear in the
structure and the Hanf normal form. Consequently, the set of pairs
$(\mathcal A,\varphi)$ with $\mathcal A$ a structure of degree $\le f$
and $\varphi$ a sentence with $\mathcal A\models\varphi$ can be
decided in time
\begin{equation}
  g_1(|\varphi|,f)+g_2(|\varphi|,f)\cdot|\mathcal A|\label{eq:Seese}
\end{equation}
where $g_1(|\varphi|,f)$ is the time needed to compute the Hanf normal
form and $g_2(|\varphi|,f)$ is its size\footnote{It should be noted
  that Frick and Grohe proved this problem to be solvable with $g_1$
  the identity and $g_2$ triply exponential in $|\varphi|$ and $f$
  \cite{FriG04}.}  (it can be shown that the function $g_2$ is
elementary since the radiuses appearing in the Hanf normal form can be
bound). Since Seese's construction is not primitive recursive, the
function $g_1$ is not primitive recursive. The constructions by Durand
and Grandjean and by Lindell show that $g_1$ can be replaced by a
primitive recursive function~$g_1'$. Since they get another Hanf
normal form, also the function $g_2$ changes to $g_2'$, say (but as
for Seese's Hanf normal form, also this function is elementary).

In addition, Durand and Grandjean and Lindell show that the set of
tuples $\bar a$ from $\cA$ with $\cA\models\varphi(\bar a)$ can be
computed in time
\begin{equation}
  g_1'(|\varphi|,f)+g_2'(|\varphi|,f)\cdot(|\cA|+|\{\bar a\mid
\cA\models\varphi(\bar a)\}|)\label{eq:Durand}
\end{equation}
where $\varphi$ is a first-order formula and $f$ is the degree of the
structure $\cA$. Since $g_2'$ is elementary, this result is of
particular importance if this set has to be computed for many
structures~$\mathcal A$ and a fixed formula~$\varphi$.  This was
recently improved by Kazana and Segoufin who compute this set in time
\[
   2^{2^{2^{O(|\varphi|)}}}\cdot
     (|\cA|+|\{\bar a\mid \cA\models\varphi(\bar a)\}|)\,.
\]
Here, the triply exponential factor originates from the work by Frick
and Grohe \cite{FriG04} and the summand $g_1'$ is avoided since they
do not precompute a Hanf normal form.  Our result in this paper will
show that the Hanf normal form can be computed in triply exponential
time. Consequently, the functions from (\ref{eq:Seese}) and from
(\ref{eq:Durand}) can be replaced by triply exponential functions. As
a result, the model checking algorithm by Seese and the enumeration
algorithm by Durand and Grandjean and by Lindell perform as well as
the algorithms by Frick and Grohe and by Kazana and Segoufin, resp.

\section{Construction of a Hanf normal form}

A \emph{Hanf formula with free variables from $x_1,\dots,x_n$} is a
formula of the form
\[
   \exists^{\ge m}y:\sphere_\tau(\bar x,y)
\]
where $\tau$ is a sphere with $n+1$ centers. A formula is in
\emph{Hanf normal form} if it is a Boolean combination of Hanf
formulas.

\begin{theorem}\label{T-construction}
  From a formula $\Phi$ with free variables among $\bar x$ and
  $f\ge 1$, one can construct an $f$-equivalent formula $\Psi$ in
  Hanf normal form. This construction can be carried out in time
  \[
    2^{f^{2^{O(|\Phi|)}}}\,.
  \]
\end{theorem}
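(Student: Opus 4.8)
The plan is to argue by induction on the structure of $\Phi$, transforming it step by step into a Boolean combination of Hanf formulas while keeping the maximal radius $d$ of the spheres that occur under control. Atomic formulas are immediate: whether an atom $R(\bar x)$ or $x_i=x_j$ holds in $(\cA,\bar a)$ is determined by $S_1^\cA(\bar a)$, so such an atom is $f$-equivalent to the disjunction of the sphere formulas $\sphere_\sigma(\bar x)$ over those $1$-spheres $\sigma$ of degree $\le f$ that satisfy it; and negation, conjunction and disjunction are trivial, since Hanf normal form is by definition closed under Boolean combinations. It therefore remains to show (i) that every sphere formula is itself a Boolean combination of Hanf formulas and (ii) how to handle an existential quantifier. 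Point (i), which I would dispatch first, is easy: if $\sigma$ is a $d$-sphere with $n$ centres $\bar a$ and $\sigma^+$ is the $d$-sphere with $n+1$ centres obtained by adjoining $a_n$ once more, then $S_d^\cA(\bar a)\cong\sigma$ iff $S_d^\cA(\bar a,a_n)\cong\sigma^+$, because adding a redundant centre changes neither the ball nor its induced structure; hence $\sphere_\sigma(\bar x)$ is $f$-equivalent to the single Hanf formula $\exists^{\ge1}y:\sphere_{\sigma^+}(\bar x,y)$, the duplicated centre forcing $y=x_n$.

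The heart of the proof is the existential step. Suppose $\Phi(\bar x,y)$ has already been turned into a Boolean combination of Hanf formulas $\exists^{\ge m}z:\sphere_\tau(\bar x,y,z)$, where $\tau$ ranges over $d$-spheres with $n+2$ centres, and consider $\exists y:\Phi$. The first move is to replace $\Phi$ by an $f$-equivalent disjunction of ``$d$-type descriptions'': the truth of $\Phi$ in $(\cA,\bar a,b)$ depends only on the vector recording, for each such $\tau$, the minimum of a fixed threshold and the number of $c\in\cA$ with $S_d^\cA(\bar a,b,c)\cong\tau$, so $\Phi$ is $f$-equivalent to the disjunction, over all such vectors $v$, of $\bigwedge_\tau\exists^{=v_\tau}z:\sphere_\tau(\bar x,y,z)$ (with $\ge$ instead of $=$ when $v_\tau$ equals the threshold); it does not matter that some $v$ are unrealisable, as those disjuncts are unsatisfiable. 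Thus it suffices to eliminate $\exists y$ from one description $\delta(\bar x,y)$, and for this I split on $\dist^\cA(y,\bar a)$.

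If $\dist^\cA(y,\bar a)$ is below a threshold of order $d$, then the part of $\cA$ near $(\bar x,y)$ --- including the contribution to the counts in $\delta$ of all $z$ within order $d$ of $(\bar x,y)$ --- is a function of the $O(d)$-sphere $S_{O(d)}^\cA(\bar a)$ alone, while the remaining, ``far'', $z$ contribute through $S_d^\cA(\bar a,b,z)\cong S_d^\cA(\bar a,b)\sqcup S_d^\cA(z)$, so that their number realising a given one-centre $d$-sphere $\rho$ is the total number of realisations of $\rho$ in $\cA$ minus a quantity read off from $S_{O(d)}^\cA(\bar a)$. Hence ``$\exists y$ close to $\bar a$ with $\delta$'' is $f$-equivalent to a Boolean combination of sphere formulas $\sphere(\bar x)$ of radius $O(d)$ and of Hanf \emph{sentences} $\exists^{\ge k}z:\sphere_\rho(z)$ about global one-centre counts, which by (i) is a Boolean combination of Hanf formulas. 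If instead $\dist^\cA(y,\bar a)$ is large, the $O(d)$-balls around $\bar a$, around $y$, and around any candidate $z$ are pairwise disjoint; then $\delta$ pins the $O(d)$-sphere type of $y$ down to a constant, and the number of $z$ realising a given $\tau$ together with $(\bar x,y)$ splits, according to the region containing $z$, into a function of $S_{O(d)}^\cA(\bar a)$, a function of that fixed $O(d)$-sphere of $y$, and a global one-centre count corrected by inclusion--exclusion over the three pairwise disjoint regions. The condition thus reduces to: $S_{O(d)}^\cA(\bar a)$ has one of a fixed list of types, a fixed family of global one-centre Hanf sentences holds, and there is at least one ``far'' witness $y$ of the prescribed $O(d)$-sphere type --- the last being equivalent to a Hanf sentence stating that the global number of such $y$ exceeds the number of those within $O(d)$ of $\bar a$, the subtracted number again being a function of $S_{O(d)}^\cA(\bar a)$. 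Taking the disjunction over the two cases and over all descriptions $\delta$, and applying (i), yields an $f$-equivalent Boolean combination of Hanf formulas of radius $O(d)$.

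For the running time, note that the radius is increased only in the existential step, and there only by a constant factor, so after processing all of $\Phi$ it is at most $2^{O(|\Phi|)}$, the number of centres never exceeds $|\Phi|$, and the degree is the fixed $f$. A sphere of degree $\le f$ with at most $|\Phi|$ centres and radius at most $2^{O(|\Phi|)}$ lives on a ball with at most $|\Phi|\cdot f^{2^{O(|\Phi|)}}$ elements, so the number of isomorphism types of such spheres --- which also bounds the counting thresholds that arise --- is $2^{f^{2^{O(|\Phi|)}}}$; since $\Psi$ is a Boolean combination over these types and every step above is a purely syntactic manipulation requiring no satisfiability test (in contrast to the non-constructive proof of Corollary~\ref{C Hanf}), both $|\Psi|$ and the construction time are $2^{f^{2^{O(|\Phi|)}}}$. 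The main obstacle is the existential step: carrying out the case analysis on the position of the auxiliary element $z$ relative to \emph{both} $\bar x$ and $y$, getting exactly right the inclusion--exclusion that re-expresses ``far'' counts as global counts minus local ones, and --- what is decisive for the stated bound --- verifying that one quantifier inflates the sphere radius only by a constant factor, which is precisely what keeps the final blow-up at three rather than four exponentials.
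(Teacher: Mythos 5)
Your overall architecture (induction on $\Phi$, atoms via $1$-spheres, Boolean connectives for free, all the work in the existential step, radius growing by a constant factor per quantifier so that it ends at $2^{O(|\Phi|)}$) matches the paper, and your semantic analysis of $\exists y$ --- splitting on whether $y$ is near $\bar a$, and re-expressing ``far'' counts as global one-centre counts corrected by local quantities read off from an $O(d)$-sphere around the centres --- is essentially the content of the paper's Lemma~\ref{L-construction} (its cases (a) and (b), with the correction term $p$). The trick of turning $\sphere_\sigma(\bar x)$ into a single Hanf formula by duplicating a centre is also fine.

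The genuine gap is in the first move of your existential step and, consequently, in the complexity analysis. You replace the Hanf normal form $\varphi$ of the body by a disjunction over all \emph{complete} $d$-type descriptions $\bigwedge_\tau\exists^{=v_\tau}z:\sphere_\tau$. If $\varphi$ contains $k$ distinct spheres $\tau$ with thresholds up to $M$, this disjunction has up to $(M+1)^k$ disjuncts, i.e.\ it is exponential in the number of Hanf subformulas of $\varphi$ --- and that number is already $2^{f^{O(d)}\cdot n^{O(1)}}$ after the first quantifier elimination. Iterating an operation that exponentiates the current formula at every quantifier is exactly what produces a non-elementary tower (or, if you carefully deduplicate so that $k$ is bounded by the number of isomorphism types of spheres rather than by $|\varphi|$, still a fourth exponential: $2^{2^{f^{2^{O(|\Phi|)}}}}$). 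Your closing argument, ``$\Psi$ is a Boolean combination over these $N$ types, hence $|\Psi|$ and the running time are bounded by $N$'', is a non sequitur: a Boolean combination (even a DNF of complete descriptions) over $N$ atoms can have size exponential in $N$. The paper avoids this entirely by never passing to complete types: in Lemma~\ref{L-construction} it keeps the Boolean structure of $\varphi$ intact and replaces each Hanf subformula $\alpha=\exists^{\ge m}x_{n+2}:\sphere_\tau$ \emph{in place} by a single new Hanf formula $\alpha'$ of size $|\alpha|+f^{O(d)}$, guarded by a disjunction over the $3d$-sphere type of $\bar a a_{n+1}$; the per-quantifier multiplicative blow-up is then $2^{n^{O(1)}\cdot f^{O(d)}}$, \emph{independent of} $|\varphi|$, and raising that to the power $|\Phi|$ stays triply exponential. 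To repair your proof you would need to replace the type-expansion by such an in-place substitution (or otherwise argue that your output collapses to a formula of the right size), not merely count isomorphism types of spheres.
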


The construction of $\Psi$ from $\Phi$ will be done by induction on
the construction of $\Phi$. The central part in this induction is
described by the following lemma (the proof of
Theorem~\ref{T-construction} can be found at the end of this section).

\begin{lemma}\label{L-construction}
  From a formula $\varphi$ in Hanf normal form with free variables
  among $\bar x,x_{n+1}$ and $f\ge 1$, one can construct a formula
  $\psi$ in Hanf normal form with free variables in $\bar x$ such that
  $\exists x_{n+1}:\varphi$ and $\psi$ are $f$-equivalent. This
  construction can be carried out in time $|\varphi|\cdot
  2^{n^{O(1)}\cdot f^{O(d)}}$ where $d$ is the maximal radius of a
  sphere appearing in~$\varphi$. Furthermore, the largest radius
  appearing in $\psi$ is~$3d$.
\end{lemma}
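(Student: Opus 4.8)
The plan is to eliminate the existential quantifier $\exists x_{n+1}$ from a Hanf normal form formula $\varphi(\bar x, x_{n+1})$ by a case analysis on how far the witness $x_{n+1}$ can sit from the parameters $\bar x$. First I would bring $\varphi$ into a normal form as a disjunction of conjunctions of Hanf formulas $\exists^{\ge m} y : \sphere_\tau(\bar x, x_{n+1}, y)$ and their negations; since $\exists$ distributes over $\vee$, it suffices to eliminate it from a single such conjunction, and to re-collect the results. Now observe: the truth value of a Hanf formula $\exists^{\ge m} y : \sphere_\tau(\bar x, x_{n+1}, y)$ with $\tau$ of radius~$d$ depends only on the $d$-sphere $S_d^\cA(\bar x, x_{n+1})$. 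So the whole conjunction is determined by the isomorphism type of $S_d^\cA(\bar x, x_{n+1})$, and I would enumerate all finitely many such $d$-spheres of degree $\le f$ with $n+1$ centers — there are at most $2^{n^{O(1)}\cdot f^{O(d)}}$ of them, since such a sphere has at most $(n+1)\cdot f^d$ elements. For each such sphere type $\sigma$ that makes the conjunction true, I would write down a formula $\alpha_\sigma(\bar x)$ asserting that there exists an $x_{n+1}$ realising $\sigma$ together with $\bar x$; the required $\psi$ is the disjunction of these $\alpha_\sigma$ over the relevant conjuncts and sphere types.

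The heart of the argument, and the main obstacle, is to express ``$\exists x_{n+1}: S_d^\cA(\bar x, x_{n+1}) \cong \sigma$'' as a Hanf normal form formula with free variables only $\bar x$, while controlling the blow-up of the radius. I would split according to whether the new center $x_{n+1}$ lies within distance $\le 2d$ of some $x_i$, or not. In the \emph{near case}, $x_{n+1}$ is already visible inside the $2d$-sphere around $\bar x$, so demanding $S_d^\cA(\bar x, x_{n+1}) \cong \sigma$ with $x_{n+1}$ within distance $\le 2d$ of some $x_i$ is a condition on the $3d$-sphere $S_{3d}^\cA(\bar x)$ (a $d$-neighbourhood of $x_{n+1}$ reaches at most $2d+d=3d$ from $\bar x$); one writes it as a disjunction, over all $3d$-sphere types $\rho$ of $\bar x$ that contain an appropriately placed point whose $d$-sphere is $\sigma$, of the statement $S_{3d}^\cA(\bar x) \cong \rho$, and the latter is itself equivalent to a Boolean combination of Hanf formulas $\exists^{\ge 1} y: \sphere_{\rho'}(\bar x, y)$ and $\exists^{=0}\,\dots$. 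In the \emph{far case}, $x_{n+1}$ is at distance $> 2d$ from all of $\bar x$, so $S_d^\cA(\bar x, x_{n+1})$ is the disjoint union of $S_d^\cA(\bar x)$ and a $d$-sphere with one center realising some isolated type $\sigma_0$; then the condition just says that $S_d^\cA(\bar x)$ matches the $\bar x$-part of $\sigma$ and that the number of points realising $\sigma_0$ at distance $> 2d$ from $\bar x$ is at least $1$. Expressing ``$\ge 1$ realisation of $\sigma_0$ far from $\bar x$'' requires a little care: the $d$-sphere around a far witness can accidentally graze the $3d$-ball around $\bar x$, so I would count via $\exists^{\ge k} y : \sphere_{\sigma_0}(y)$ for small $k$ relative to how many realisations of $\sigma_0$ can occur inside $B_{3d}^\cA(\bar x)$ — that bound is again $\le (n+1)f^{3d}$ — so a threshold of that magnitude suffices, keeping us inside Hanf normal form with radius $\le 3d$.

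For the size and time bound, note each sphere type $\sigma$ has description length $2^{n^{O(1)}\cdot f^{O(d)}}$ and there are at most that many of them; each produced $\alpha_\sigma$ is a Boolean combination of Hanf formulas of radius $\le 3d$ over $\le (n+1)f^{3d}$-element spheres, hence again of size $2^{n^{O(1)}\cdot f^{O(d)}}$; and the outer disjunction ranges over $\varphi$'s disjuncts (there are $\le 2^{|\varphi|}$ of them, contributing the factor $|\varphi|$ after a suitable accounting) times the sphere types. Multiplying through, the construction runs in time $|\varphi|\cdot 2^{n^{O(1)}\cdot f^{O(d)}}$ as claimed, and by construction the largest radius occurring in $\psi$ is $3d$. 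The one routine-but-delicate verification I would actually carry out in detail is the correctness of the near/far split — specifically that in the far case the two $d$-spheres are genuinely disjoint as $L_{n+1}$-structures and that the threshold chosen for counting far realisations is exactly right — since an off-by-one there would either lose equivalence or inflate the radius.
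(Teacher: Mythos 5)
There is a genuine gap, and it sits at the load-bearing claim of your second step: it is \emph{not} true that the truth value of a Hanf formula $\exists^{\ge m}y:\sphere_\tau(\bar x,x_{n+1},y)$ depends only on the $d$-sphere $S^\cA_d(\bar x,x_{n+1})$. If in $\tau$ the centre $c_{n+2}$ (the one interpreting $y$) lies at distance $\ge 2d(\tau)$ from $c_{n+1}$ and hence its $d$-ball is disconnected from that of $c_{n+1}$ --- e.g.\ $\tau$ asserts that $y$ is an isolated point far from everything --- then the number of witnesses $y$ is a \emph{global} count over all of $\cA$, and no bounded sphere around $(\bar x,x_{n+1})$ determines whether it reaches the threshold $m$. (Even for connected $\tau$ the relevant sphere has radius $3d$, not $d$, since witnesses can sit at distance up to $2d(\tau)-1$ from $x_{n+1}$ and carry their own $d$-ball.) Consequently ``the whole conjunction is determined by the isomorphism type of $S^\cA_d(\bar x,x_{n+1})$'' is false, the disjunction you build over sphere types is not $f$-equivalent to $\exists x_{n+1}:\varphi$, and your near/far analysis is applied to the wrong pair: you split on the distance between $x_{n+1}$ and $\bar x$, whereas the split that matters is between the Hanf-witness $y$ and the eliminated variable $x_{n+1}$.

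The paper's proof addresses exactly this point. It also enumerates sphere types, but of radius $3d$, and instead of evaluating each atom $\alpha=\exists^{\ge m}x_{n+2}:\sphere_\tau$ to a truth value it \emph{rewrites} it: if $S^\tau_{d(\tau)}(c_{n+1}c_{n+2})$ is connected, all witnesses lie in $B^{\tau'}_{2d(\tau)}(c_{n+1})$ and $\alpha$ does become $\top$ or $\bot$; but if it is disconnected, $\alpha$ is replaced by the new Hanf formula $\exists^{\ge m+p}x_{n+2}:\sphere_\sigma(\bar x,x_{n+2})$, where $\sigma$ forgets the centre $c_{n+1}$ and $p$ (read off from the fixed $3d$-sphere type $\tau'$) compensates for the witnesses accidentally close to $x_{n+1}$. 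This threshold shift $m\mapsto m+p$ is the idea your proposal is missing; notably, you invoke precisely this kind of correction for the far-case existence of $x_{n+1}$, but not where it is indispensable, namely for the atoms of $\varphi$. A secondary point: the preliminary DNF conversion can blow the formula up to $2^{|\varphi|}$ disjuncts, which is not covered by the claimed bound $|\varphi|\cdot 2^{n^{O(1)}\cdot f^{O(d)}}$; the paper avoids this by substituting atoms in place, leaving the Boolean structure of $\varphi$ untouched.
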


\begin{proof}
  Set $e=3d$. The formula $\psi$ will be a disjunction with
  one disjunct for every $e$-sphere $\tau'$ with $n+1$ centers. This
  disjunct will have the form
  \[
    \psi_{\tau'}=\varphi_{\tau'}\land\exists^{\ge1}x_{n+1}:\sphere_{\tau'}\,.
  \]
  We next describe how $\varphi_{\tau'}$ is obtained from
  $\varphi$. For this, let $\alpha=\exists^{\ge
    m}x_{n+2}:\sphere_\tau$ be some Hanf formula appearing in
  $\varphi$. This formula will be replaced by the Hanf formula
  $\alpha'$ that we construct next. In this construction, we
  distinguish two cases, namely whether the $d(\tau)$-sphere around
  $c_{n+1}c_{n+2}$ in $\tau$ is connected or not.
  \begin{enumerate}[(a)]
  \item $S^{\tau}_{d(\tau)}(c_{n+1}c_{n+2})$ is connected.

    Let $p$ denote the number of elements $c\in
    B^{\tau'}_{2d(\tau)}(c_{n+1})$ with
    \[
      S^{\tau'}_{d}(\bar c c_{n+1} c)\cong\tau
    \]
    and set
    \[
      \alpha'=
      \begin{cases}
        \top & \text{ if }p\ge m\\
        \bot & \text{ otherwise.}
      \end{cases}
    \]

  \item $S^{\tau}_{d(\tau)}(c_{n+1}c_{n+2})$ is not connected.

    Let
    \[
       \sigma=S_{d(\tau)}^\tau(\bar c c_{n+2})
    \]
    and write $p$ for the number of $c\in
    B^{\tau'}_{2d(\tau)}(c_{n+1})$ with
    \[
       S^{\tau'}_{d(\tau)}(\bar c c)\cong\sigma\,.
    \]
    In this case, set 
    \[
       \alpha'= \exists^{\ge m+p} x_{n+2}:\sphere_\sigma(\bar x,x_{n+2}) \,.
    \]
  \end{enumerate}

  This finishes the construction of $\varphi_{\tau'}$ and therefore of
  the disjunction $\psi$. Clearly, $\psi$ is in Hanf normal form.

  Now let $a_{n+1}\in\cA$ with $S^\cA_d(\bar a a_{n+1})\cong\tau'$. We
  will show
  \[
     (\cA,\bar a a_{n+1})\models\alpha\iff(\cA,\bar a)\models\alpha'
  \]
  again distinguishing the two cases above.
  \begin{enumerate}[(a)]
  \item First let $S^{\tau}_{d(\tau)}(c_{n+1}c_{n+2})$ be
    connected. Then, for $a_{n+2}\in\cA$ with $S^\cA_{d(\tau)}(\bar a
    a_{n+1} a_{n+2})\cong\tau$, we have
    $\dist^\cA(a_{n+1},a_{n+2})=\dist^\tau(c_{n+1},c_{n+2})\le
    2d(\tau)-1<2d(\tau)$ and therefore $a_{n+2}\in
    B^\cA_{2d(\tau)}(a_{n+1})$. Hence
    \begin{align*}
      |\{a_{n+2}\in\cA\mid S^\cA_{d(\tau)}(\bar a a_{n+1}
      a_{n+2})\cong\tau\}|
      &=|\{a_{n+2}\in B^\cA_{2d(\tau)}(a_{n+1})\mid S^\cA_{d(\tau)}(\bar a a_{n+1} a_{n+2})\cong\tau\}|\\
      &=|\{c\in B^{\tau'}_{2d(\tau)}(c_{n+1})\mid S^\cA_{d(\tau)}(\bar
      c c_{n+1} c)\cong\tau\}|=p
    \end{align*}
    where the last equality follows from $S^\cA_e(\bar a
    a_{n+1})\cong\tau'$ and $e\ge 3d(\tau)$. Hence we showed
    \begin{align*}
      (\cA,\bar a a_{n+1})\models\alpha
         & \iff (\cA,\bar a a_{n+1})\models \exists^{\ge m} x_{n+2}:\sphere_\tau(\bar x,x_{n+1})\\ 
         & \iff p\ge m\\
         & \iff(\cA,\bar a)\models\alpha'\,.
    \end{align*}
  \item Next consider the case that
    $S^{\tau}_{d(\tau)}(c_{n+1}c_{n+2})$ is not connected. Then, for
    $a_{n+2}\in\cA$, we have $S^\cA_{d(\tau)}(\bar a a_{n+1}
    a_{n+2})\cong\tau$ if and only if
    \[
       \dist^\cA(a_{n+1},a_{n+2})\ge 2d(\tau)\text{ and }
       S^\cA_{d(\tau)}(\bar a a_{n+2})\cong\sigma\,.
    \]
    But this implies
    \begin{align*}
      |\{a_{n+2}\in\cA\mid &S^\cA_{d(\tau)}(\bar a a_{n+1} a_{n+2})\cong\tau\}|  \\
         &= |\{a_{n+2}\in\cA\mid \dist^\cA(a_{n+1},a_{n+2})\ge 2d(\tau),S^\cA_{d(\tau)}(\bar a a_{n+2})\cong\sigma\}|  \\
         &= |\{a_{n+2}\in\cA\mid S^\cA_{d(\tau)}(\bar a a_{n+2})\cong\sigma\}|  \\
              &\qquad- |\{a_{n+2}\in\cA\mid \dist^\cA(a_{n+1},a_{n+2})< 2d(\tau),S^\cA_{d(\tau)}(\bar a a_{n+2})\cong\sigma\}|  \\
         &= |\{a_{n+2}\in\cA\mid S^\cA_{d(\tau)}(\bar a a_{n+2})\cong\sigma\}|  \\
              &\qquad- |\{c\in\tau'\mid \dist^{\tau'}(c_{n+1},c)< 2d(\tau),S^{\tau'}_{d(\tau)}(\bar c c)\cong\sigma\}|  \\
         &= |\{a_{n+2}\in\cA\mid S^\cA_{d(\tau)}(\bar a a_{n+2})\cong\sigma\}|  
              - p.
    \end{align*}
    Hence
    \begin{align*}
      (\cA,\bar a a_{n+1})\models\alpha
        &\iff (\cA,\bar a a_{n+1})\models\exists^{\ge m}x_{n+2}:\sphere_\tau\\
        &\iff |\{a_{n+2}\in\cA\mid S^\cA_{d(\tau)}(\bar a a_{n+1} a_{n+2})\cong\tau\}|  \ge m\\
        &\iff |\{a_{n+2}\in\cA\mid S^\cA_{d(\tau)}(\bar a a_{n+2})\cong\sigma\}|  \ge m+p\\
        &\iff (\cA,\bar a)\models\alpha'\,.
    \end{align*}
  \end{enumerate}

  We next evaluate the size of the formula $\psi$. Since $\psi$ is a
  disjunction of formulas $\psi_{\tau'}$, we first fix some $e$-sphere
  $\tau'$ with $n+1$ centers (with $e=3d$). Then $\tau'$ has $\le
  f^{3d-1}\cdot(n+1)$ elements. Hence the formula $\sphere_{\tau'}$
  has size $\le (f^{3d-1}\cdot(n+1))^{O(1)}$ (the constant $O(1)$
  depends on the signature $L$. Now we deal with the formula
  $\varphi_{\tau'}$. It results from $\varphi$ by the replacement of
  subformulas of the form $\alpha=\exists^{\ge
    m}x_{n_2}:\sphere_\tau$. In the first case,
  $|\alpha'|\le|\alpha|$. In the second case, note that $\sigma$ is a
  subsphere of $\tau$, so
  $|\sphere_\sigma|\le|\sphere_\tau|<|\alpha|$. Furthermore, $p\le
  f^{2d(\tau)-1}\le f^{2d-1}$. Note that the formula
  \[
    \alpha'=\exists^{\ge m+p} x_{n_2}:\sphere_\sigma(\bar x,x_{n+2})
  \]
  is shorthand for
  \begin{equation}
     \exists y_1,y_2,\dots, y_{m+p}:
      \bigwedge_{1\le i<j\le m+p}y_i\neq y_j\land
      \forall y\left(\left(\bigvee_{1\le i\le m+p} y=y_i\right)
                              \to \sphere_\sigma(\bar x,y)\right)\,.
      \label{eq:shorthand}
  \end{equation}
  The size of this formula is bounded by
  \[
     O(p^2)+|\sphere_\sigma|\le O(f^{4d-2})+|\alpha|\,.
  \]
  Since $\varphi_{\tau'}$ is obtained from $\varphi$ by at most
  $|\varphi|$ replacements, we obtain
  \[
    |\varphi_{\tau'}|= |\varphi|\cdot O(f^{4d-2})+
           |\varphi|\le |\varphi|\cdot f^{O(d)}
  \]
  and therefore
  \[
   |\varphi_{\tau'}\land\exists^{\ge1} x_{n+1}:\sphere_{\tau'}|
     = f^{O(d)}\cdot(|\varphi|+ n^{O(1)})\,.
  \]

  The number of disjuncts of $\psi$ equals the number of $3d$-spheres
  with $n+2$ centers. Since any such sphere has at most
  $f^{3d-1}(n+1)$ elements, the number of these spheres is bounded by
  \[
     2^{(n\cdot f^{3d-1})^{O(1)}}= 2^{n^{O(1)}\cdot f^{O(d)}}
  \]
  which finally results in
  \[
    |\psi| \le 2^{n^{O(1)}\cdot f^{O(d)}}\cdot
               f^{O(d)}\cdot(|\varphi|+ n^{O(1)})\le 2^{n^{O(1)}\cdot f^{O(d)}}\,.
  \]

  We finally come to the evaluation of the time needed to
  compute~$\psi$. The crucial point in our estimation is the time
  needed to compute the numbers $p$ in (a) and (b); we only
  discuss~(a).

  There are $\le f^{2d(\tau)+1}-1$ candidates $c$ in
  $B^{\tau'}_{2d(\tau)}(c_{n+1})$. For any of them, we have to compute
  the set $B^{\tau'}_d(c)$ (which can be done in time
  $f^{2d+1}-1$). Then, isomorphism of $\tau$ and $S^{\tau'}_d(\bar c
  c_{n+1} c)$ has to be decided. But these are two structures of degree
  $\le f$ and of size $(n+2)\cdot (f^{d+1}-1)\le|\varphi|\cdot
  (f^{d+1}-1)$. Hence, by \cite{Luk82}, this isomorphism test can be
  performed in time polynomial in the size of the structures (the
  degree of the polynomial depends on~$f$). Hence, the number $p$ can
  indeed be computed within the given time bound.  \qed
\end{proof}

We now come to the proof of the central result of this paper:

\begin{proof}[of Theorem~\ref{T-construction}]
  The proof is carried out by induction on the construction of the
  formula $\Phi$. So first, let $\varphi$ be a quantifier-free
  subformula of $\Psi$ whose free variables are among $x_1,\dots,x_n$.
  Let $T$ be the set of all 1-spheres $\tau$ of degree $\le f$ with
  $n+1$ centers such that the constants $c_1,\dots,c_n$ of $\tau$
  satisfy $\varphi$. Then set
  \[
     \psi=\bigvee_{\tau\in T}\exists^{\ge1} x_{n+1}:\sphere_\tau\,.
  \]
  Note that any $1$-sphere with $n+1$ centers has precisely $n+1$
  elements. Furthermore, $n\le|\Phi|$ since $\varphi$ is a subformula
  of $\Phi$. Hence the formula $\sphere_\tau$ has size
  $n^{O(1)}\le|\Phi|^{O(1)}$ and there are $2^{|\Phi|^{O(1)}}$
  disjuncts in the formula $\psi$ (where the constants $O(1)$ depend
  on the signature $L$), i.e., $|\psi|= 2^{|\Phi|^{O(1)}}$.

  We now come to the induction step. The computation of Hanf normal
  forms of $\neg\varphi$ and of $\varphi\lor\varphi'$ are
  straightforward from Hanf normal forms of $\varphi$ and
  $\varphi'$. The only critical point in the induction are subformulas
  of the form $\exists x_{n+1}:\beta$. By the induction hypothesis,
  $\beta$ can be transformed into an $f$-equivalent Hanf normal form
  $\varphi$ and then Lemma~\ref{L-construction} is invoked yielding an
  $f$-equivalent Hanf normal form for $\exists x_{n+1}:\beta$. We have
  to invoke Lemma~\ref{L-construction} at most $|\Phi|$ times where
  the number $n$ is always bounded by $|\Phi|$. Each invokation
  increases the radius of the spheres considered by a factor of three,
  so the maximal radius will be $3^{|\Phi|}=2^{O(|\Phi|)}$. Hence,
  each invokation of Lemma~\ref{L-construction} increases the formula
  by a factor of $2^{f^{2^{O(|\Phi|)}}}$. Putting this to the power of
  $|\Phi|$ does not change the expression.\qed
\end{proof}

\section{Optimality}

In this section, we give a matching lower bound for the size of an
$f$-equivalent formula in Hanf normal form. Namely, we prove

\begin{theorem}
  There is a family of sentences $(\chi_n)_{n\in\bN}$ such that
  $|\chi_n|\in O(n)$ and every 3-equivalent formula $\psi_n$ in Hanf
  normal form has size at least $2^{2^{2^n+1}-1}$.
\end{theorem}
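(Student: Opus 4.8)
The plan is to exhibit a family of sentences $\chi_n$ of linear size whose only $3$-equivalent Hanf normal forms must be enormous, by using the counting thresholds that a Hanf normal form is forced to ``see''. The natural candidate, following the first application of Theorem~\ref{T Hanf 2} discussed in the excerpt, is to work over degree-$\le 2$ or degree-$\le 3$ structures built from disjoint unions of cycles (paths). A sentence $\chi_n$ of size $O(n)$ can, using iterated quantification, express a numerical predicate on the number of vertices (or on the lengths of cycles) that involves a triply exponential threshold: for instance, ``the underlying graph is a single cycle whose length is divisible by some number encoded in $n$ bits'', or more robustly, a sentence that is satisfied exactly when a certain sphere-count vector lies in a set whose description inherently requires counting up to $2^{2^{2^n}}$. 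The standard trick is that $O(n)$ first-order quantifiers can define, via a logarithmic-cost addressing scheme, arithmetic on numbers of triply exponential magnitude (this is the same phenomenon exploited by Frick--Grohe and by Dawar et al.\ \cite{DawGKS07} for the lower bounds mentioned in the introduction).

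The key steps, in order, would be: (1) Fix the degree bound $f=3$ and restrict attention to structures that are disjoint unions of simple cycles, so that every $d$-sphere with one center is just a path of length $2d-1$ with a marked midpoint (or a short cycle), and hence the sphere-count vector $t^\cA$ is extremely simple to describe. (2) Construct $\chi_n$ with $|\chi_n|=O(n)$ so that, on such structures, $\chi_n$ holds precisely when the number of cycles of a designated small length equals (or is congruent to something modulo) a number $N_n$ with $N_n$ of order $2^{2^{2^n}}$; the $O(n)$ size comes from encoding $N_n$ bit-by-bit using nested quantifiers over an auxiliary linear order or successor structure that is itself part of the signature, or from a recursive ``doubling'' construction as in the classical non-elementary lower bounds. (3) Observe that any $3$-equivalent Hanf normal form $\psi_n$ is a Boolean combination of formulas $\exists^{\ge m}x:\sphere_\tau(x)$; to distinguish all the relevant structures it must contain a threshold $m$ at least as large as $N_n$, because the only way to separate ``$N_n-1$ cycles'' from ``$N_n$ cycles'' with sphere-counting sentences is to have a disjunct mentioning $\exists^{=N_n-1}$ or $\exists^{\ge N_n}$. (4) Finally, recall that the shorthand $\exists^{\ge m}x:\sphere_\tau(x)$ expands (equation~\eqref{eq:shorthand}) to a formula with $\Omega(m^2)$ symbols from the $\bigwedge_{i<j}x_i\ne x_j$ block, or more sharply, that it requires $m$ distinct variables; with $m\ge N_n\ge 2^{2^{2^n}}$ one reads off $|\psi_n|\ge 2^{2^{2^n+1}-1}$ after the arithmetic is tuned so the constant in the exponent matches exactly.

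The main obstacle is step (2): designing a genuinely $O(n)$-size sentence that pins down a triply exponential count, while keeping the ambient structures of bounded degree so that Hanf's theorem applies and so that the sphere vocabulary is trivial. One has to be careful that the arithmetic-encoding gadget does not itself blow up the degree, and that $\chi_n$ truly forces the large threshold to appear in every equivalent Hanf normal form rather than merely in some particular one; this is where a lower-bound lemma is needed saying that if two structures of degree $\le 3$ have sphere-count vectors that agree up to threshold $m$ then they agree on every Hanf normal form of threshold $<m$, so that separating them demands threshold $\ge m$. The remaining bookkeeping --- propagating the $3$-fold exponential through the expansion of $\exists^{\ge m}$ and matching the precise bound $2^{2^{2^n+1}-1}$ --- is routine once the construction is in place.
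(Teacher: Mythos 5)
Your overall strategy --- deriving the lower bound from the counting thresholds $m$ that a Hanf normal form must contain --- cannot work, and the obstacle is not just the difficulty you flag in step (2); it is refuted by the paper's own upper bound. Inspecting Lemma~\ref{L-construction}, each quantifier-elimination step increases the thresholds only by $p\le f^{2d-1}$, so the Hanf normal form produced by Theorem~\ref{T-construction} for a sentence of size $O(n)$ uses thresholds bounded by $n\cdot f^{2\cdot 3^{O(n)}}=2^{2^{O(n)}}$, i.e.\ only doubly exponential in $n$. Hence for \emph{every} sentence $\chi_n$ of size $O(n)$ there exists \emph{some} $3$-equivalent Hanf normal form whose thresholds are at most doubly exponential; no such $\chi_n$ can force a threshold $N_n$ of order $2^{2^{2^n}}$ into every equivalent normal form, so step (3) of your plan is unachievable. (Even if it were, expanding $\exists^{\ge m}$ costs only $O(m^2)$ symbols, which for the largest attainable $m$ still falls one exponential short of the target.) Independently, step (2) fails on its own terms: over disjoint unions of cycles the $\equiv_r$-threshold for the number of copies of a fixed cycle is small (polynomial in $r$), so an $O(n)$-size sentence cannot pin down a triply exponential count there; and the auxiliary linear order you invoke for bit-addressing has unbounded degree in the Gaifman graph, which takes you outside the class of structures to which the statement (and Hanf's theorem) applies.

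The paper obtains the blowup from a different source: not the magnitude of the thresholds but the \emph{number of distinct sphere types} that a normal form must mention. It works with forests over a signature with two successor relations and one unary colour, and lets $\chi_n$ (of size $O(n)$, following Frick--Grohe) say that no two complete trees of height $2^n$ in the forest are isomorphic. A sphere of radius $2^n$ around the root of such a tree is the whole tree, which has $2^{2^n+1}-1$ nodes, so there are $2^{2^{2^n+1}-1}$ colourings; if $\psi_n$ is smaller than that, some coloured tree $\cB$ with root $r$ has its formula $\sphere_{(\cB,r)}$ missing from $\psi_n$, and one builds $\cA_0\models\chi_n$ (containing $M+1$ copies of every proper neighbourhood, where $M$ is the maximal threshold in $\psi_n$) together with $\cA_2=\cA_0\uplus\cB\uplus\cB\not\models\chi_n$; these two structures cannot be distinguished by $\psi_n$. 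If you want to salvage your approach, this is the mechanism to aim for: make the spheres themselves carry doubly exponentially many bits of information, rather than trying to make the counts large.
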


The formulas $\chi_n$ will speak about labeled trees. More formally,
our signature $L$ consists of two binary relations $S_0$ and $S_1$ and
one unary relation~$U$. A structure $\cA=(A,S_0^\cA,S_1^\cA,U^\cA)$
over this signature is a \emph{tree} if there is $X\subseteq\{0,1\}^*$
finite, nonempty and prefix-closed such that
\[
  \cA\cong(X,\{(u,u0)\mid u0\in X\},\{(u,u1)\mid u1\in X\},H)
\]
for some $H\subseteq X$. The tree is \emph{complete} if every inner
node has two children and any two maximal paths have the same length,
this length is called the \emph{height} of the tree (i.e., $X=2^{\le
  h}$ where $h$ is the height). A \emph{forest} is a disjoint union of
trees. As in \cite[Lemma~23]{FriG04}, one can construct formulas
$\chi_n$ of size $O(n)$ such that for any forest~$\cA$, we have
\begin{quote} $\cA\models\chi_n$ if and only if any two complete trees
  of height $2^n$ in $\cA$ are non-isomorphic.
\end{quote}

\begin{lemma}
  Let $\psi$ be a formula in Hanf normal form that is $3$-equivalent
  to $\chi_n$. Then $|\psi|\ge2^{2^{2^n+1}-1}$.
\end{lemma}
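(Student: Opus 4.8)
The plan is to prove the slightly stronger bound $|\psi|>2^{2^{2^n+1}-1}$ by a counting argument that forces $\psi$ to spend at least one Hanf sentence, with threshold at least two, on each isomorphism type of complete tree of height $2^n$. Set $M:=2^{2^n+1}-1$, the number of nodes of such a tree, so the goal is $|\psi|\ge 2^M$. The complete binary tree with its left-child relation $S_0$ and right-child relation $S_1$ is rigid, so the isomorphism type of a complete labeled tree of height $2^n$ is determined exactly by its set of $U$-nodes; hence there are exactly $N:=2^M$ such trees up to isomorphism, say $T_1,\dots,T_N$. First I would set $\cA:=T_1\sqcup\dots\sqcup T_N$, a forest of degree $\le 3$ whose complete height-$2^n$ subtrees are precisely its $N$ pairwise non-isomorphic components; thus $\cA\models\chi_n$, hence $\cA\models\psi$. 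For each $j\le N$ let $\cB_j:=\cA\sqcup T_j$; this forest has degree $\le 3$ and contains two isomorphic copies of $T_j$, so $\cB_j\not\models\chi_n$ and $\cB_j\not\models\psi$. (If $\psi$ has free variables, add a single isolated vertex to every structure and assign all free variables to it; nothing below is affected.)

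Next I would pass to sphere counts. Write $\psi$ as a Boolean combination of Hanf sentences $\exists^{\ge m}x:\sphere_\tau(x)$, let $\Theta$ be the finite set of spheres $\tau$ occurring in it, and let $m_\tau$ be the largest threshold used with a fixed $\tau\in\Theta$. For a structure $\mathcal C$ put $\#_{\mathcal C}(\tau):=|\{c\in\mathcal C\mid S^{\mathcal C}_{d(\tau)}(c)\cong\tau\}|$; then, over structures of degree $\le 3$, whether $\mathcal C\models\psi$ depends only on the vector $(\min(\#_{\mathcal C}(\tau),m_\tau))_{\tau\in\Theta}$. Since sphere counts are additive over disjoint unions, $\#_\cA(\tau)=\sum_{i=1}^N\#_{T_i}(\tau)$ and $\#_{\cB_j}(\tau)=\#_\cA(\tau)+\#_{T_j}(\tau)$ for every $\tau$. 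Because $\psi$ separates $\cA$ from $\cB_j$, there is some $\tau^{(j)}\in\Theta$ at which these two vectors differ; as $\#_{\cB_j}\ge\#_\cA$ and $t\mapsto\min(t,m_{\tau^{(j)}})$ is monotone, this forces $\#_\cA(\tau^{(j)})<m_{\tau^{(j)}}$ and $\#_{T_j}(\tau^{(j)})\ge 1$.

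The counting then finishes the proof. Fix $\tau\in\Theta$ that equals $\tau^{(j)}$ for at least one $j$. Every $j'$ with $\tau^{(j')}=\tau$ has $\#_{T_{j'}}(\tau)\ge 1$, so
\[
  |\{j'\mid\tau^{(j')}=\tau\}|\le|\{i\mid\#_{T_i}(\tau)\ge 1\}|\le\sum_{i=1}^N\#_{T_i}(\tau)=\#_\cA(\tau)<m_\tau .
\]
Summing over $\tau\in\Theta$ (spheres that are no $\tau^{(j)}$ contribute nothing) gives
\[
  N=\sum_{\tau\in\Theta}|\{j\mid\tau^{(j)}=\tau\}|<\sum_{\tau\in\Theta}m_\tau\le|\psi| ,
\]
where the last inequality holds because the abbreviation $\exists^{\ge m}x:\sphere_\tau(x)$ unfolds to a formula containing $m$ existential quantifiers and, for distinct $\tau\in\Theta$, these contributions lie in disjoint Hanf subformulas of $\psi$. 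Hence $|\psi|>2^{2^{2^n+1}-1}$, slightly more than claimed.

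I do not expect a genuine obstacle: the argument is short and the remaining work is routine bookkeeping, namely (i) that the truth value of a Hanf-normal-form sentence over degree-$\le 3$ structures is a function of the clamped count vector $(\min(\#(\tau),m_\tau))_\tau$; (ii) that $\sum_{\tau\in\Theta}m_\tau\le|\psi|$; (iii) the additivity of sphere counts over disjoint unions; and (iv) the one-line remark that in a disjoint union of complete height-$2^n$ trees the complete height-$2^n$ subtrees are exactly the components, so the construction of $\chi_n$ borrowed from~\cite{FriG04} gives $\cA\models\chi_n$ and $\cB_j\not\models\chi_n$ as used above. The single place I would slow down is making sure the free-variable case reduces cleanly to the sentence case via the isolated-vertex trick.
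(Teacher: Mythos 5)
Your argument is correct, but it reaches the bound by a genuinely different route than the paper. The paper's proof is a pigeonhole on \emph{which} sphere formulas occur in $\psi$: since there are $2^{2^{2^n+1}-1}$ colored complete trees of height $2^n$ and $|\psi|$ is assumed smaller, some tree $\cB$ has its root-sphere formula $\sphere_{(\cB,r)}$ missing from $\psi$ altogether; one then builds a padding structure $\cA_0$ containing $M+1$ copies of every proper ball of $\cB$ (where $M$ is the largest threshold in $\psi$) so that every sphere other than $(\cB,r)$ is realized above all thresholds in both $\cA_0$ and $\cA_0\uplus\cB\uplus\cB$, whence $\psi$ cannot separate them while $\chi_n$ does. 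You instead keep all $N=2^{2^{2^n+1}-1}$ trees in play, use the monotonicity of the clamped count vector under the disjoint union $\cA\mapsto\cA\sqcup T_j$ to extract, for each $j$, a sphere $\tau^{(j)}$ with $\#_\cA(\tau^{(j)})<m_{\tau^{(j)}}$ and $\#_{T_j}(\tau^{(j)})\ge 1$, and then charge each $j$ to $\tau^{(j)}$ to get $N<\sum_\tau m_\tau\le|\psi|$. Your version avoids the auxiliary padded structure entirely, yields the marginally stronger strict inequality, and treats the free-variable case explicitly (which the paper leaves implicit); on the other hand it leans on the additional inequality $\sum_{\tau\in\Theta}m_\tau\le|\psi|$, which is only valid under the convention---also used in the paper's size computations---that $|\psi|$ measures the \emph{unfolded} form of $\exists^{\ge m}x:\sphere_\tau(x)$, so that a threshold-$m$ Hanf formula has length at least $m$; the paper's argument needs no such accounting because it only counts distinct sphere formulas. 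Both proofs share the same two structural facts: rigidity of complete binary trees (so colorings classify isomorphism types) and additivity of one-center sphere counts over disjoint unions.
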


\begin{proof}
  Suppose $|\psi|<2^{2^{2^n+1}-1}$. Let $M$ be the maximal number $m$
  such that $\exists^{\ge m}x:\sphere_\sigma$ appears in $\psi$ (for
  any sphere $\sigma$). We can assume that $\psi$ does not contain any
  formula $\sphere_\sigma$ where $\sigma$ is a $1$-sphere.  The
  complete tree of height $2^n$ has $2^{2^n+1}-1$ nodes. Hence there
  are $2^{2^{2^n+1}-1}$ ways to color such a tree. Since we assume
  $|\psi|<2^{2^{2^n+1}-1}$, there is one such tree~$\cB$ (with
  root~$r$) such that the formula $\sphere_{(\cB,r)}$ does not appear
  in~$\psi$.

  Next, we need a bit of terminology. If $\cA$ is a tree, $a$ a node
  in $\tau$, and $d\in\bN$, then also $\tau\restriction B^\cA_d(a)$ is
  a tree that we denote~$N^\cA_d(a)$. Recall that the sphere
  $S^\cA_d(a)=(N^\cA_d(a),a)$ about $a$ of radius~$d$ has an
  additional constant.

  Now we define a structure $\cA_0$. It consists of $M+1$ copies of
  any of the structures $N^\cB_d(b)$ where
  \begin{enumerate}
  \item $1<d\le 2^n$ and $b$ is not the root of $\cB$ or
  \item $d<2^n$.
  \end{enumerate}

  Finally, let $\cA_2=\cA_0\uplus\cB\uplus\cB$ be the disjoint union
  of $\cA_0$ and two copies of the tree~$\cB$. Since $\cA_0$ does not
  contain any complete tree of height $2^n$, we get
  $\cA_0\models\chi_n$ and therefore $\cA_0\models\psi$. Note that any
  sphere realized in $\cA_0$ or $\cA_2$ is also realized in $\cB$. So
  let $b\in\cB$, and $d\in\bN$. We distinguish several cases:
  \begin{enumerate}
  \item $1<d\le 2^n$ and $b$ is not the root of $\cB$. Then the
    sphere $(N^\cB_d(b),b)$ is realized in $\cA_0$ more than $M$
    times, hence the same holds for $\cA_2$.
  \item $d<2^n$. Then $(N^\cB_d(b),b)$ is realized in $\cA_0$ more
    than $M$ times, hence the same holds for~$\cA_2$.
  \item $b$ is the root of the tree $\cB$ and $d=2^n$. Then
    $N^\cB_d(b)=\cB$. Hence $S^\cB_d(b)$ is not realized in~$\cA_0$
    and it is realized twice in~$\cA_2$. But validity of $\psi$ does
    not depend on this number since $\psi$ does not mention the
    formula $\sphere_{(\cB,b)}$.
  \end{enumerate}
  Hence, we obtain $\cA_2\models\psi$, contrary to our assumption that
  $\chi_n$ and $\psi$ are $3$-equivalent.\qed
\end{proof}

The theorem now follows immediately from this lemma.


\end{document}